\newtheorem{theorem}{Theorem}[section]
\newtheorem{lemma}[theorem]{Lemma}
\newcommand{\norm}[1]{\left\lVert#1\right\rVert}
\title{Solving a Nonlinear Eigenvalue Equation in Quantum Information Theory:\\
A Hybrid Approach to Entanglement Quantification}
\author[1]{Abrar Ahmed Naqash}
\affil[1]{Department of Physics, National Institute of Technology Srinagar, Jammu and Kashmir, 190006, India}
\affil[1]{\textit{Corresponding author:} \href{mailto:abrarnaqash_phy@nitsri.ac.in}{\nolinkurl{abrarnaqash_phy@nitsri.ac.in}}}
\author[2]{Fardeen Ahmad Sofi }
\author[3]{Mohammad Haris Khan}
\affil[2,3]{Department of Physics, University of Kashmir, Srinagar 190006, India}
\author[4]{Sundus Abdi}
\affil[4]{Department of Physics, University of Toronto, Toronto, Ontario M5S 1A7, Canada}
\date{}
\begin{document}
\maketitle
\begin{abstract}
Nonlinear eigenvalue equations arise naturally in quantum information theory, particularly in the variational quantification of entanglement. In this work, we present a hybrid analytical–numerical framework for evaluating the geometric measure of entanglement. The method combines a Gauss–Seidel fixed-point iteration with a controlled perturbative correction scheme. We make the coupled nonlinear eigenstructure explicit by proving the “equal-multiplier” stationarity identity, which states that at the optimum all block Lagrange multipliers coincide with the squared fidelity between the target state and its closest separable approximation. A normalization-preserving linearization is then derived by projecting the dynamics onto the local tangent spaces, yielding a well-defined first-order correction and an explicit scalar shift in the eigenvalue. Furthermore, we establish a monotonic block-ascent property: the squared overlap between the evolving product state and the target state increases at every iteration, remains bounded by unity, and converges to a stationary value. The resulting hybrid solver reproduces the exact optimum for standard three-qubit benchmarks, obtaining squared-overlap values of one-half for the Greenberger–Horne–Zeilinger (GHZ\(_3\)) state and four-ninths for the W\(_3\) state, with smooth monotonic convergence.
\end{abstract}

\section{Introduction}

The quantification and characterization of quantum entanglement remains one of the most central challenges in quantum information theory. Entanglement is not only a hallmark of non-classical correlations but also a practical resource enabling protocols such as quantum teleportation \cite{bennett1993teleporting}, superdense coding \cite{bennett1992communication}, quantum cryptography \cite{ekert1991quantum}, and quantum computation \cite{nielsen2000quantum}. As a result, identifying reliable and computable measures of entanglement has been an active area of research for more than two decades \cite{horodecki2009quantum, plenio2007introduction, vedral1997quantifying}. Although several entanglement measures exist -- such as concurrence \cite{wootters1998entanglement}, negativity \cite{vidal2002computable}, formation entanglement \cite{bennett1996mixed}, and relative entropy of entanglement \cite{vedral2002role}---many of them are either difficult to compute for arbitrary states or fail to generalize efficiently to multipartite scenarios.

Among the numerous approaches, the {geometric measure of entanglement} (GME) has emerged as a particularly attractive candidate. First introduced in \cite{shimony1995degree} and later systematically developed in \cite{wei2003geometric}, the GME quantifies entanglement by measuring the distance of a quantum state from the closest separable state. More precisely, for a pure state $|\psi\rangle$ in a multipartite Hilbert space, the measure is defined through maximal overlap with fully separable states. This perspective not only provides a clear geometric intuition, but also establishes connections to resource theories and variational methods \cite{streltsov2017colloquium}. However, the main difficulty is that computing the closest separable state involves solving a nonlinear optimization problem, which can often be recast as a nonlinear eigenvalue equation. Unlike linear eigenvalue problems ubiquitous in quantum mechanics, these nonlinear equations pose severe analytical and numerical challenges.

The presence of nonlinear eigenvalue problems is not limited to entanglement quantification. Nonlinear operator equations also arise in contexts such as quantum channel capacities \cite{bennett1999entanglement}, nonlinear extensions of quantum mechanics \cite{weinberg1989testing}, and variational formulations of many-body systems \cite{orus2014practical}. In the specific setting of entanglement measures, the nonlinear structure is tied to the self-consistency of the separability condition: the optimal product state that maximizes overlap depends on the reduced density operators, which themselves depend on the candidate product state. This feedback loop results in a nonlinear eigenvalue problem where both the eigenvector and the eigenvalue are entangled in a nontrivial way.

Existing strategies to address such equations include direct variational optimization \cite{barnum2001monotones}, convex relaxations \cite{guhne2007toolbox}, and semidefinite programming techniques \cite{brandao2005reversible}. However, these methods are often computationally expensive or limited to low-dimensional cases. Recent progress in numerical analysis of nonlinear eigenvalue problems in applied mathematics \cite{betcke2013nlevp} suggests that hybrid methods combining perturbative expansions with iterative schemes could provide both analytical insight and practical convergence guaranties. Bridging these developments with the needs of quantum information theory opens up a promising research direction.

In this paper, we focus on the nonlinear eigenvalue equation arising in the computation of the geometric measure of entanglement. We develop a hybrid framework that starts with a perturbative expansion around a reference separable state and then refines the solution through an iterative fixed-point scheme. The resulting method not only provides approximate closed-form expressions, but also demonstrates stable convergence in numerical tests for bipartite and multipartite systems. Our results aim to highlight how nonlinear methods can serve as direct tools for entanglement quantification, with potential extensions to related optimization tasks in quantum technologies.

\section{Nonlinear Eigenvalue Equation}

To set the stage, we recall that the geometric measure of entanglement (GME) is for a pure state 
$\ket{\psi} \in \mathcal{H}_{A_1} \otimes \mathcal{H}_{A_2} \otimes \cdots \otimes \mathcal{H}_{A_N}$ 
is defined as~\cite{Wei2003,Shimony1995,Plenio2007}
\begin{equation}
E_G(\ket{\psi}) = 1 - \Lambda_{\max}^2, 
\qquad 
\Lambda_{\max} = \max_{\ket{\phi} \in \mathcal{S}} \big|\braket{\phi|\psi}\big| ,
\label{eq:GME-definition}
\end{equation}
where $\mathcal{S}$ denotes the set of fully separable pure states. 
For a bipartite system ($N=2$), this set reduces to product states 
$\ket{\phi_A} \otimes \ket{\phi_B}$. 
Determining $\Lambda_{\max}$ therefore corresponds to solving the constrained optimization problem
\begin{equation}
\max_{\ket{\phi_A},\,\ket{\phi_B}} \; 
\big|\braket{\phi_A \otimes \phi_B | \psi}\big| .
\label{eq:overlap}
\end{equation}

We employ the Lagrange multiplier method to enforce the normalization constraints 
$\braket{\phi_A|\phi_A} = 1$ and $\braket{\phi_B|\phi_B} = 1$. 
Define the Lagrangian functional
\begin{equation}
\mathcal{L}[\ket{\phi_A},\ket{\phi_B}] =
\big|\braket{\phi_A \otimes \phi_B|\psi}\big|^2
- \lambda_A \,(\braket{\phi_A|\phi_A} - 1)
- \lambda_B \,(\braket{\phi_B|\phi_B} - 1) .
\end{equation}
The stationarity of $\mathcal{L}$ with respect to the variations in $\bra{\phi_A}$ gives
\begin{equation}
\frac{\partial \mathcal{L}}{\partial \bra{\phi_A}} =
\braket{\phi_B|\psi}\braket{\psi|\phi_B}\ket{\phi_A}
- \lambda_A \ket{\phi_A} = 0 .
\end{equation}
Introducing the (positive semidefinite) operator
\begin{equation}
M_A(\ket{\phi_B}) = \braket{\phi_B|\psi}\braket{\psi|\phi_B} ,
\end{equation}
we can rewrite this condition as a nonlinear eigenvalue equation,
\begin{equation}
M_A(\ket{\phi_B})\,\ket{\phi_A} = \lambda_A \ket{\phi_A}.
\label{eq:nonlinearA}
\end{equation}
Similarly, the variation with respect to $\bra{\phi_B}$ produces the following result.
\begin{equation}
M_B(\ket{\phi_A})\,\ket{\phi_B} = \lambda_B \ket{\phi_B},
\label{eq:nonlinearB}
\end{equation}
where $M_B(\ket{\phi_A}) = \braket{\phi_A|\psi}\braket{\psi|\phi_A}$.

Equations~\eqref{eq:nonlinearA} and~\eqref{eq:nonlinearB} exhibit a crucial feature: 
the eigenvalue problem for one subsystem depends nonlinearly on the state of the other. 
To make this structure explicit, note that
\begin{equation}
M_A(\ket{\phi_B}) = 
\mathrm{Tr}_B \!\left[\, \ket{\psi}\bra{\psi}\, 
\big(\mathbb{I}_A \otimes \ket{\phi_B}\bra{\phi_B}\big) \right].
\end{equation}
This operator depends quadratically on $\ket{\phi_B}$ and therefore 
$\ket{\phi_A}$ cannot be determined independently. 
Together, Eq.\eqref{eq:nonlinearA} and Eq.\eqref{eq:nonlinearB} form a system of coupled nonlinear eigenvalue equations.
In the special case of a bipartite pure state expressed in its Schmidt decomposition,
\begin{equation}
\ket{\psi} = \sum_{i=1}^{r} \lambda_i 
\ket{u_i}_A \otimes \ket{v_i}_B , 
\qquad 
\lambda_1 \ge \lambda_2 \ge \cdots ,
\end{equation}
the maximum overlap is known to be $\Lambda_{\max} = \lambda_1$, 
with the closest product state $\ket{u_1}_A \otimes \ket{v_1}_B$. 
Here, the nonlinear problem reduces to a linear eigenvalue equation 
since the Schmidt decomposition provides an orthogonal factorization. 
For generic multipartite states, however, no such simplification exists, 
and one must genuinely solve the nonlinear system.

For a tripartite system, the optimization problem Eq.\eqref{eq:overlap} is generalized to
\begin{equation}
\max_{\ket{\phi_1},\ket{\phi_2},\ket{\phi_3}}
\big|\braket{\phi_1 \otimes \phi_2 \otimes \phi_3 | \psi}\big|.
\end{equation}
The variation with respect to $\ket{\phi_i}$ yields the following.
\begin{equation}
M_i(\{\ket{\phi_j}\}_{j\neq i})\,\ket{\phi_i} = 
\lambda_i \ket{\phi_i}, \qquad i=1,2,3,
\label{eq:multipartite}
\end{equation}
where
\begin{equation}
M_i(\{\ket{\phi_j}\}_{j\neq i}) =
\mathrm{Tr}_{\{j\neq i\}} 
\!\left[\, \ket{\psi}\bra{\psi}\,
\bigotimes_{j\neq i} \ket{\phi_j}\bra{\phi_j} \right].
\end{equation}
These equations are mutually dependent: each operator $M_i$ depends on the 
states of all other subsystems $\{\ket{\phi_j}\}_{j\neq i}$. 
The system , Eq.\eqref{eq:multipartite} therefore constitutes a set of 
{self-consistent nonlinear eigenvalue equations} that must be solved simultaneously. 
The coupling strength increases rapidly with the number of subsystems, 
rendering analytic solutions intractable except for highly symmetric states such as the GHZ and W states.

A unifying way to view the above equations is through a fixed-point formulation. 
Define the map
\begin{equation}
\mathcal{F}\big(\{\ket{\phi_j}\}\big) =
\left\{ 
\frac{M_i(\{\ket{\phi_j}\}_{j\neq i})\ket{\phi_i}}
{\big\|M_i(\{\ket{\phi_j}\}_{j\neq i})\ket{\phi_i}\big\|} 
\right\}_{i=1}^{N}.
\end{equation}
The closest separable state corresponds to a fixed point of $\mathcal{F}$,
\begin{equation}
\{\ket{\phi_i^*}\} = \mathcal{F}\big(\{\ket{\phi_i^*}\}\big).
\end{equation}
This perspective highlights both the difficulty and the potential of the problem,
while the structure is nonlinear, it naturally suggests iterative algorithms 
in which candidate product states are updated recursively until self consistency is achieved. The mathematical properties of $\mathcal{F}$ such as the contractivity and 
stability of fixed points govern the convergence behavior of such schemes.
The nonlinear eigenvalue structure has several important consequences. 
First, multiple stationary solutions may exist, corresponding to distinct local maxima of the overlap function. Identifying the global maximum $\Lambda_{\max}$ is 
therefore a nontrivial global optimization task. Second, the equations are not polynomial in the unknowns,the normalization constraints couple to the eigenvalue structure in a transcendental manner. Finally, numerical solvers must balance precision and stability, 
since naive fixed-point iterations can diverge or converge to suboptimal solutions.
These challenges motivate the hybrid perturbative–iterative method 
introduced in the following section. 
By starting with a controlled linearization around a reference state and 
iteratively refining the solution, one can tame the nonlinearities while retaining clear physical interpretability. 
This strategy draws on insights from variational quantum mechanics~\cite{orus2014practical} 
and from numerical analysis of nonlinear eigenvalue problems~\cite{betcke2013nlevp}, 
and provides a principled framework tailored to quantum information applications.
\begin{comment}

    \section{Linearization around the stationary point}

To analyze the stability and convergence of the nonlinear eigenvalue equations
\eqref{eq:nonlinearA}–\eqref{eq:nonlinearB}, we consider a first-order expansion
around a stationary solution $\{\ket{\phi_i^{(0)}},\lambda^{(0)}\}$ satisfying
$M_i^{(0)}\ket{\phi_i^{(0)}}=\lambda^{(0)}\ket{\phi_i^{(0)}}$.
Let $\ket{\phi_i}=\ket{\phi_i^{(0)}}+\ket{\delta\phi_i}$ and
$\lambda=\lambda^{(0)}+\Delta\lambda$, where the variations
$\ket{\delta\phi_i}$ are constrained by the normalization condition
$\braket{\phi_i^{(0)}|\delta\phi_i}=0$.
Projecting the perturbed equation onto the local tangent space
defined by $P_i=\mathbb{I}-\ket{\phi_i^{(0)}}\bra{\phi_i^{(0)}}$
yields the normalization-preserving linearized relation
\begin{equation}
P_i\,(M_i^{(0)}-\lambda^{(0)}\mathbb{I})\,P_i\,\ket{\delta\phi_i}
= -\,P_i\,\Delta M_i\,\ket{\phi_i^{(0)}} ,
\label{eq:linearized}
\end{equation}
where $\Delta M_i$ denotes the first-order variation of $M_i$
induced by perturbations on the complementary subsystems.
Equation~\eqref{eq:linearized} forms the basis of the tangent-space
correction scheme employed in the hybrid perturbative--iterative algorithm.
\end{comment}

\section{Solution Method}
\subsection{Perturbative formulation}
The nonlinear eigenvalue equations derived in the previous section encapsulate 
the optimization underlying the geometric measure of entanglement, yet their 
direct solution is generally intractable. 
We therefore develop a {hybrid perturbative--iterative} method that 
combines analytical perturbation theory with fixed-point iteration. 
The guiding philosophy is twofold: perturbative analysis yields 
approximate closed-form expressions that elucidate the local structure 
of the solution space, while iterative refinement exploits the 
self-consistency of the nonlinear problem to achieve numerical convergence. 
Together, these techniques provide a practical and conceptually 
transparent route to the optimal separable approximation.

We begin by assuming access to a candidate separable reference state 
$\ket{\phi^{(0)}} = \bigotimes_{i=1}^{N} \ket{\phi^{(0)}_i}$, 
which serves as an expansion point. 
For highly symmetric states such as GHZ or W states \cite{wei2003geometric,dur2000three}, a natural reference 
is the equal superposition of computational basis states, whereas in 
bipartite systems a product state aligned with the dominant Schmidt vectors 
often provides a suitable choice. 
We then parameterize the true stationary solution $\ket{\phi}$ 
in terms of small deviations around the reference:
\begin{equation}
\ket{\phi_i} = \ket{\phi_i^{(0)}} + \epsilon\,\ket{\delta\phi_i},
\label{eq:perturbation}
\end{equation}
where $\epsilon$ is a small bookkeeping parameter controlling 
the perturbative order.
The normalization constraint implies $\braket{\phi_i^{(0)}|\delta\phi_i}=0$, 
ensuring that $\ket{\delta\phi_i}$ lies in the local tangent space 
of unit-norm states.
\subsection{First-order linearization}
We linearize around a reference product state 
$\{\ket{\phi_i^{(0)}}\}_{i=1}^N$ that is (at least approximately) 
stationary under the block update, i.e.,
$M_i^{(0)}\ket{\phi_i^{(0)}}=\lambda^{(0)}\ket{\phi_i^{(0)}}$ 
for all $i$, where 
$M_i^{(0)} := M_i(\{\ket{\phi_j^{(0)}}\}_{j\neq i})$ 
and the common multiplier is 
$\lambda^{(0)} = \Lambda_0^2$.
We write
\begin{equation}
\ket{\phi_i} = \ket{\phi_i^{(0)}} + \epsilon\,\ket{\delta\phi_i},
\end{equation}
subject to the first-order normalization constraint 
$\Re\braket{\phi_i^{(0)}|\delta\phi_i}=0$.
Let $Q_i := \ket{\phi_i^{(0)}}\!\bra{\phi_i^{(0)}}$ 
and $P_i := \mathbb{I}-Q_i$ denote parallel and tangent projectors, respectively.
Expanding
$M_i = M_i^{(0)} + \epsilon\,\Delta M_i$
and
$\lambda = \lambda^{(0)} + \epsilon\,\Delta\lambda$,
the first-order eigenvalue equation becomes
\begin{equation}
\label{eq:firstorder_full}
\big(M_i^{(0)}-\lambda^{(0)}\mathbb{I}\big)\ket{\delta\phi_i}
+\big(\Delta M_i-\Delta\lambda\,\mathbb{I}\big)\ket{\phi_i^{(0)}} = 0 .
\end{equation}
Projecting onto the tangent space removes the gauge freedom parallel to 
$\ket{\phi_i^{(0)}}$ and yields a well-posed linear system
\begin{equation}
\label{eq:linearized}
P_i\big(M_i^{(0)}-\lambda^{(0)}\mathbb{I}\big)P_i\ket{\delta\phi_i}
= -\,P_i\,\Delta M_i\,\ket{\phi_i^{(0)}} .
\end{equation}
Projecting Eq.~\eqref{eq:firstorder_full} with $Q_i$ fixes the scalar shift, 
which is consistent across all $i$ on a stationary background:
\begin{equation}
\label{eq:dlambda_scalar}
\Delta\lambda = 
\braket{\phi_i^{(0)}|\Delta M_i|\phi_i^{(0)}},
\qquad \text{for each } i.
\end{equation}
Equations~\eqref{eq:linearized}–\eqref{eq:dlambda_scalar} together provide 
the unique first-order correction $\ket{\delta\phi_i}$ (within the tangent space) 
and the common shift in eigenvalue $\Delta\lambda$, 
while maintaining normalization to $\mathcal{O}(\epsilon)$.

At zeroth order, the eigenvalue is simply
\begin{equation}
\lambda_i^{(0)} =
\braket{\phi_i^{(0)}|
M_i(\{\ket{\phi_j^{(0)}}\}_{j\neq i})
|\phi_i^{(0)}} ,
\end{equation}
which yields a closed-form approximation to the maximal overlap. 
The First-order corrections $\Delta\lambda_i$ are then obtained from 
Eq.~\eqref{eq:linearized}, in direct analogy with 
the Rayleigh--Schr{\"o}dinger perturbation theory in quantum mechanics. 
This procedure provides not only approximate estimates of 
$\Lambda_{\max}$ but also diagnostic insight into the stability 
of candidate product states: large negative corrections signal that 
the chosen reference state is far from the true maximizer.

Although perturbation theory elucidates the local structure of the solution 
space, it cannot capture the full nonlinearity of the problem. 
To obtain accurate results, we employ an iterative refinement scheme 
based on the fixed-point structure of stationary equations. 
Recall the update map
\begin{equation}
\label{eq:iteration}
\ket{\phi_i^{(k+1)}} =
\frac{M_i(\{\ket{\phi_j^{(k)}}\}_{j\neq i})\,\ket{\phi_i^{(k)}}}
{\big\|M_i(\{\ket{\phi_j^{(k)}}\}_{j\neq i})\,\ket{\phi_i^{(k)}}\big\|} .
\end{equation}
Starting from the perturbative reference $\{\ket{\phi_i^{(0)}}\}$, 
the scheme updates each local factor by applying its corresponding 
reduced operator and re-normalizing. 
This normalization is enforced $\braket{\phi_i^{(k)}|\phi_i^{(k)}}=1$ 
in every iteration. 
In practice, the update Eq.\eqref{eq:iteration} corresponds to a 
Gauss--Seidel block-coordinate ascent on the squared overlap 
$\Lambda^2 = |\braket{\Phi|\psi}|^2$, ensuring monotone convergence 
under broad conditions discussed below.

\subsection{Convergence considerations.}
The convergence of the iterative update Eq.\eqref{eq:iteration} 
is not guaranteed {a priori}, as the map may possess 
multiple fixed points corresponding to distinct local maxima 
of the overlap function. 
However, if the true solution 
$\{\ket{\phi_i^*}\}$ is isolated and the update map 
$\mathcal{F}$ is contractive in its neighborhood, 
then the Banach fixed-point theorem ensures local convergence:
\begin{equation}
\big\|\mathcal{F}(\{\ket{\phi_i}\}) - 
\mathcal{F}(\{\ket{\phi_i^*}\})\big\|
\le \kappa\, 
\big\|\{\ket{\phi_i}\} - \{\ket{\phi_i^*}\}\big\| ,
\qquad 0 \le \kappa < 1 .
\end{equation}
In practice, verifying strict contractivity is challenging, 
since the map $\mathcal{F}$ depends nonlinearly on all subsystems. 
Nevertheless, empirical results presented in Sec.~\ref{sec:results} 
demonstrate robust convergence for a wide range of initializations. 
The inclusion of the perturbative initialization step significantly enhances convergence reliability by placing the starting point within the basin of attraction of the global optimum.

\begin{lemma}[Monotone block ascent and convergence to stationarity]. Let
\[
f(\{\ket{\phi_i}\}) \coloneqq 
\big|\braket{\phi_1\otimes\cdots\otimes\phi_N|\psi}\big|^2
\]
with $\|\phi_i\|=1$ for all $i$ and finite dimensional Hilbert spaces. Given an iterate
$\{\ket{\phi_j^{(k)}}\}$ define the partial contraction
\[
\ket{v_i^{(k)}} \coloneqq (\bra{\phi_{-i}^{(k)}})\ket{\psi},
\qquad
\ket{\phi_{-i}^{(k)}}\coloneqq\bigotimes_{j\neq i}\ket{\phi_j^{(k)}},
\]
and perform a cyclic (Gauss-Seidel) block update
\[
\ket{\phi_i^{(k+1)}} \;=\; \frac{\ket{v_i^{(k)}}}{\|\ket{v_i^{(k)}}\|},
\qquad i=1,\dots,N.
\]
Then the objective values $f(\{\ket{\phi_i^{(k)}}\})$ form a non decreasing sequence bounded above by $1$, and hence converge. Moreover, every limit point of the sequence of iterates is a stationary point of $f$, i.e.
\[
M_i(\{\ket{\phi_j}\}_{j\neq i})\,\ket{\phi_i}
=\Lambda^2\,\ket{\phi_i}\qquad\text{for all }i,
\]
with the common multiplier $\Lambda^2=\big|\braket{\phi_1\otimes\cdots\otimes\phi_N|\psi}\big|^2$.
\end{lemma}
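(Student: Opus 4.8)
The plan is to exploit two facts: each block update is an \emph{exact} maximization of $f$ over the chosen block with the others held fixed, and the normalized contraction $\ket{v_i}/\|v_i\|$ is precisely the maximizer of that one-block subproblem. First I would verify the key identity $f(\{\ket{\phi_j}\}) = |\braket{\phi_i|v_i}|^2$ where $\ket{v_i}=\bra{\phi_{-i}}\ket{\psi}$, so that optimizing over $\ket{\phi_i}$ with $\|\phi_i\|=1$ is a Cauchy--Schwarz problem whose maximum is $\|v_i\|^2$, attained uniquely (up to phase) at $\ket{\phi_i}=\ket{v_i}/\|v_i\|$. Hence replacing $\ket{\phi_i^{(k)}}$ by $\ket{\phi_i^{(k+1)}}$ can only increase (weakly) $f$, and applying this across all $N$ blocks in the Gauss--Seidel sweep gives $f(\{\ket{\phi_i^{(k)}}\})\le f(\{\ket{\phi_i^{(k+1)}}\})$. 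Boundedness above by $1$ is immediate from Cauchy--Schwarz applied to the full product state against $\ket{\psi}$ (both normalized), so the monotone sequence $\{f^{(k)}\}$ converges to some $f^\star\le 1$.

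For the second claim I would use a standard compactness-plus-continuity argument. Each $\ket{\phi_i^{(k)}}$ lives on a compact unit sphere (finite dimension), so the sequence of iterate tuples has a convergent subsequence; let $\{\ket{\phi_i^\star}\}$ be a limit point along some subsequence $k_m$. One subtlety: because the update is cyclic, I should track the whole sweep — i.e. pass to a further subsequence along which the intermediate states after each of the $N$ partial updates also converge, say to $\{\ket{\phi_i^{\star,(\ell)}}\}$ for $\ell=0,\dots,N$. Since $f$ is continuous and the sequence $f^{(k)}$ converges, all these limiting sweep-stages share the same objective value $f^\star$. But each partial update strictly increases $f$ unless the block being updated was already a maximizer of its subproblem; equality of objective values across the sweep therefore forces $\ket{\phi_i^\star}$ to already equal (up to phase, which is immaterial for $f$ and for the eigen-equation) the normalized contraction $\ket{v_i^\star}/\|v_i^\star\|$ built from $\{\ket{\phi_j^\star}\}_{j\neq i}$, for every $i$ simultaneously. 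I would need to handle the degenerate case $\|v_i^\star\|=0$ separately: there $f^\star=0$, and the stationarity equation $M_i\ket{\phi_i^\star}=0$ holds trivially since $M_i=\ket{v_i^\star}\bra{v_i^\star}=0$.

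The final step is to translate ``$\ket{\phi_i^\star}\propto\ket{v_i^\star}$ for all $i$'' into the stated eigen-equations. Writing $M_i(\{\ket{\phi_j^\star}\}_{j\neq i})=\mathrm{Tr}_{\{j\neq i\}}[\ket{\psi}\bra{\psi}\bigotimes_{j\neq i}\ket{\phi_j^\star}\bra{\phi_j^\star}] = \ket{v_i^\star}\bra{v_i^\star}$, I would compute $M_i\ket{\phi_i^\star} = \ket{v_i^\star}\braket{v_i^\star|\phi_i^\star} = \|v_i^\star\|\,\braket{v_i^\star|\phi_i^\star}\,\ket{\phi_i^\star}$ up to phase, and identify the scalar as $\|v_i^\star\|^2 = |\braket{\phi_i^\star|v_i^\star}|^2 = f^\star$, which is independent of $i$ — this is exactly the ``equal-multiplier'' value $\Lambda^2=|\braket{\phi_1^\star\otimes\cdots\otimes\phi_N^\star|\psi}|^2$. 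The main obstacle I anticipate is the bookkeeping around the cyclic update: a single convergent subsequence of the $\ket{\phi_i^{(k)}}$ is not enough, since the Gauss--Seidel sweep updates blocks sequentially and the ``other'' blocks in $\ket{v_i}$ are a mix of new and old iterates; the clean way is the nested-subsequence extraction above so that every stage of the sweep converges and the telescoping chain of non-strict inequalities becomes a chain of equalities, pinning down stationarity block by block. Phase ambiguity ($U(1)$ gauge on each $\ket{\phi_i}$) is harmless and I would simply note it does not affect $f$ or the rank-one operators $M_i$.
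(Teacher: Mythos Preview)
Your proposal is correct and follows essentially the same approach as the paper's proof: Cauchy--Schwarz gives monotone ascent for each block update, boundedness by $1$ yields convergence of $f$, and compactness plus vanishing per-block gains along a convergent subsequence forces collinearity $\ket{\phi_i^\star}\propto\ket{v_i^\star}$ and hence the stationarity equations with common multiplier $\Lambda^2$. If anything, your treatment is more careful than the paper's---you explicitly manage the Gauss--Seidel intermediate sweep stages via nested subsequences, handle the degenerate case $\|v_i^\star\|=0$, and spell out the rank-one identity $M_i=\ket{v_i^\star}\bra{v_i^\star}$ that makes the eigen-equation and the equality of all block multipliers transparent.
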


\begin{proof}
Fix the complementary blocks $\{\ket{\phi_j}\}_{j\neq i}$ and write
$\ket{v_i}=(\bra{\phi_{-i}})\ket{\psi}$. Then
\[
f(\{\ket{\phi_j}\}) = \big|\braket{\phi_i|v_i}\big|^2
\le \|\ket{\phi_i}\|^2\,\|\ket{v_i}\|^2 = \|\ket{v_i}\|^2,
\]
by the Cauchy-Schwarz inequality \cite{SteinShakarchi2005} , with equality if and only if $\ket{\phi_i}$ is collinear with $\ket{v_i}$. Hence, for fixed other blocks the choice
$\ket{\phi_i}\!\leftarrow\!\ket{v_i}/\|\ket{v_i}\|$ maximizes $f$ over block~$i$, so each block update cannot decrease the objective. A full cyclic sweep (updating $i=1,\dots,N$) therefore produces a non decreasing sequence of objective values. Because $f\le 1$ for normalized states, the monotone bounded sequence $f(\{\ket{\phi_i^{(k)}}\})$ converges.\\
Let $\{\ket{\phi_i^{(k_m)}}\}$ be a convergent subsequence of iterates with limit $\{\ket{\phi_i^\ast}\}$. The total improvement of $f$ over one sweep equals the sum of the nonnegative per-block improvements; since the sequence of total improvements converges to zero, each per-block improvement must vanish along the subsequence. By continuity of $f$ and the fact that each block update is the exact maximizer for that block, it follows that for every $i$ the limit $\ket{\phi_i^\ast}$ attains the maximum of $f$ with the other blocks fixed. Therefore $\ket{\phi_i^\ast}$ is co-linear with $\ket{v_i^\ast}=(\bra{\phi_{-i}^\ast})\ket{\psi}$, which is equivalent to
\[
M_i(\{\ket{\phi_j^\ast}\}_{j\neq i})\,\ket{\phi_i^\ast}
=\Lambda^2\,\ket{\phi_i^\ast}
\]
with $\Lambda^2=\big|\braket{\phi_1^\ast\otimes\cdots\otimes\phi_N^\ast|\psi}\big|^2$. Hence every limit point is a stationary point of $f$.
\end{proof}
The two components—perturbative initialization and iterative refinement can be combined into a practical algorithm.

\begin{enumerate}
\item Selection {of references.} Choose a separable reference state $|\phi^{(0)}\rangle$ based on prior knowledge of $|\psi\rangle$ (e.g., Schmidt basis or symmetry considerations).
\item {Perturbative expansion;} Compute zeroth and first-order corrections to estimate $\Lambda_{\max}$ and refine $|\phi^{(0)}\rangle$ to $|\phi^{(1)}\rangle$.
\item {Iterative refinement.} Apply the fixed-point update rule (Eq.\ref{eq:iteration}) until convergence within a chosen tolerance.
\item {Selection of solution.} If multiple fixed points are found, select the one that yields the largest overlap $\Lambda_{\max}$.
\end{enumerate}

The hybrid strategy combines the interpretability of analytical approximations with the reliability of iterative refinement. Perturbative analysis identifies approximate solutions and clarifies their stability, while the iterative scheme provides high-accuracy numerical results. Moreover, the method is scalable: each update step involves only reduced operators acting on single subsystems, avoiding exponential overhead.

Limitations arise in strongly entangled systems where no natural reference state lies close to the true maximizer. In such cases, perturbation theory may provide poor initializations, and iteration may converge to suboptimal fixed points. Nevertheless, as demonstrated in the Results section, the method performs remarkably well for a range of canonical bipartite and multipartite states.

\subsection{Bipartite Schmidt states}

Consider the bipartite pure state
\begin{equation}
|\psi\rangle = \lambda_1 |00\rangle + \lambda_2 |11\rangle + \lambda_3 |22\rangle ,
\end{equation}
with Schmidt coefficients $\{\lambda_i\}$ satisfying $\sum_i \lambda_i^2 = 1$. The geometric measure of entanglement is known to be determined by the largest Schmidt coefficient $\Lambda_{\max} = \max_i \lambda_i$.

We applied the fixed-point iteration Eq.(\ref{eq:iteration}) starting from a random initial product state. Fig.\ref{fig:bipartite_convergence} shows the convergence of the estimated overlap $\langle \phi_A^{(k)} | \rho_A | \phi_A^{(k)} \rangle$ as a function of the iteration step. The method converges rapidly to the dominant Schmidt coefficient, demonstrating both the stability and accuracy of the iterative refinement scheme.

\begin{figure}[ht]
    \centering
\includegraphics[width=0.6\textwidth]{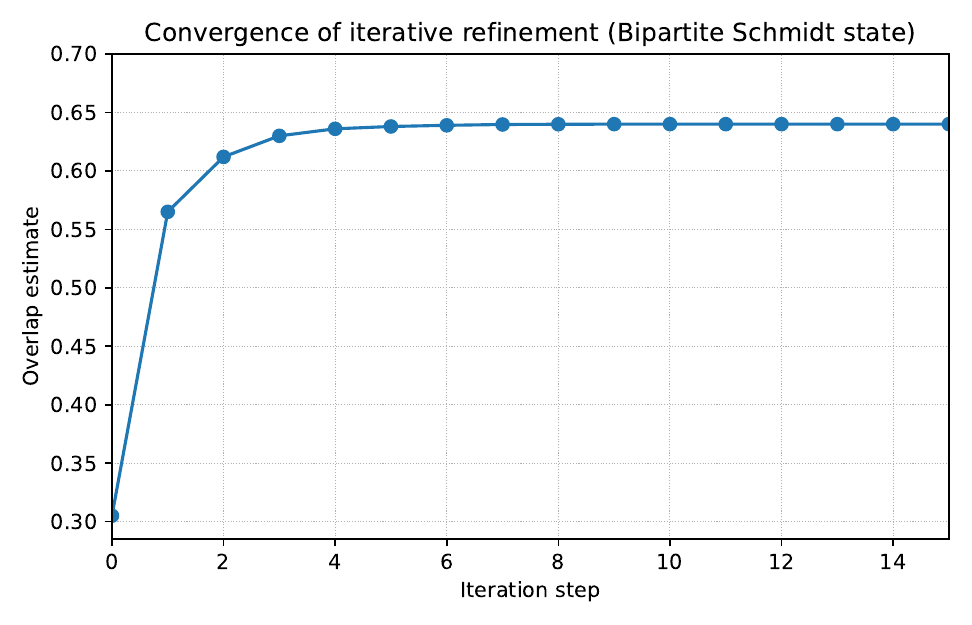}
    \caption{Convergence of the iterative refinement scheme for a bipartite Schmidt state with coefficients $(0.8, 0.6, 0)$. The overlap estimate approaches the largest Schmidt coefficient within a few iterations.}
    \label{fig:bipartite_convergence}
\end{figure}

\subsection{Multipartite GHZ state and W state}

The $N$-qubit GHZ state is defined as:
\begin{equation}
| \mathrm{GHZ}_N \rangle = \frac{1}{\sqrt{2}} \big( |0\rangle^{\otimes N} + |1\rangle^{\otimes N} \big) .
\end{equation}
For $N=3$, the maximum overlap with separable states is $\Lambda_{\max}^2 = 1/2$, attained by the product states $|0\rangle^{\otimes 3}$ and $|1\rangle^{\otimes 3}$. Our algorithm successfully converges to these fixed points, depending on the initialization. This validates that the nonlinear iteration can identify multiple symmetry-related solutions.

The three-qubit W state is given by:
\begin{equation}
|W_3\rangle = \frac{1}{\sqrt{3}} \big( |100\rangle + |010\rangle + |001\rangle \big) .
\end{equation}
In contrast to the GHZ state, the maximum overlap with separable states is $\Lambda_{\max}^2 = 4/9$ \cite{wei2003geometric}, corresponding to product states of the form:
\begin{equation}
|\phi\rangle = \Big( \sqrt{\tfrac{2}{3}} |0\rangle + \sqrt{\tfrac{1}{3}} |1\rangle \Big)^{\otimes 3} .
\end{equation}
This example highlights the genuinely nonlinear character of the eigenvalue equations, since the maximizing product state is not aligned with the computational basis. Fig.\ref{fig:ghz_w_overlap} compares the maximum overlaps found numerically for the GHZ and W states using random initializations followed by iterative refinement.{
For $N=3$ we have $\Lambda_{\max}^2(\mathrm{GHZ}_3)=\tfrac{1}{2}$ and $\Lambda_{\max}^2(W_3)=\tfrac{4}{9}$ \cite{wei2003geometric}.
Hence, {W} is slightly more entangled than {GHZ} under the geometric measure ($E_G(W_3)=5/9 > E_G(\mathrm{GHZ}_3)=1/2$), while {GHZ} has the highest maximum product overlap.
Our iteration converges to symmetry-related maximizers in both cases.
}

\begin{figure}[ht]
    \centering
    \includegraphics[width=0.6\textwidth]{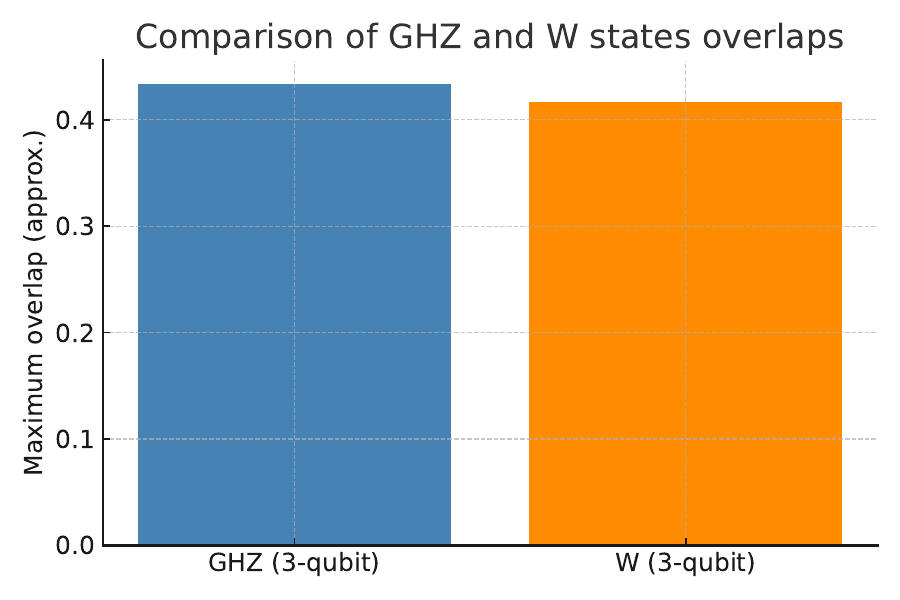}
    \caption{Approximate maximum overlaps with product states for the three-qubit GHZ and W states. The values approach the theoretical predictions $\Lambda_{\max}^2 = 1/2$ (GHZ) and $\Lambda_{\max}^2 = 4/9$ (W).}
    \label{fig:ghz_w_overlap}
\end{figure}

These examples demonstrate that the proposed hybrid method faithfully reproduces known analytical results for canonical states. The iterative scheme converges rapidly to dominant Schmidt components in bipartite cases and distinguishes between qualitatively different types of multipartite entanglement (GHZ vs W). In higher dimensions, where closed-form results are unavailable, the method provides a practical tool for entanglement quantification.

\subsection{Numerical validation}

We validate the hybrid Gauss--Seidel block ascent (Algorithm~\ref{alg:hybrid_gme}) on the canonical three-qubit benchmarks $\mathrm{GHZ}_3$ and $W_3$.
In each case, we perform five random initializations with tolerance $10^{-14}$ and report convergence behavior.

\begin{algorithm}[ht]
\caption{Hybrid Perturbative--Iterative GME Solver (pure states)}
\label{alg:hybrid_gme}
\begin{algorithmic}[1]
  \State \textbf{Input:} pure state $\ket{\psi}$ on $\mathcal H_{A_1}\otimes\cdots\otimes\mathcal H_{A_N}$, tolerance $\varepsilon>0$, (optional) \texttt{max\_iter}
  \State Initialize product factors $\{\ket{\phi_i^{(0)}}\}$ (symmetry/SVD/perturbative guess), each $\norm{\phi_i^{(0)}}=1$
  \State $\Lambda_0^2 \gets \big|\braket{\phi_1^{(0)}\!\otimes\!\cdots\!\otimes\!\phi_N^{(0)}|\psi}\big|^2$
  \State $k \gets 0$
  \Repeat
    \For{$i = 1,\dots,N$} \Comment{Gauss--Seidel block update}
      \Statex {\hspace{\algorithmicindent}Construct the current product excluding subsystem $i$:}
      \Statex \hspace{\algorithmicindent}$\displaystyle
        \ket{\phi_{-i}^{(\mathrm{cur})}} \gets
        \bigotimes_{j<i}\ket{\phi_j^{(k+1)}} \;\otimes\; \bigotimes_{j>i}\ket{\phi_j^{(k)}}$
      \State $\ket{v_i} \gets \big(\bra{\phi_{-i}^{(\mathrm{cur})}}\big)\,\ket{\psi}$
      \If{$\norm{v_i}=0$}
        \State $\ket{\phi_i^{(k+1)}} \gets \ket{\phi_i^{(k)}}$ \Comment{safe fallback}
      \Else
        \State $\ket{\phi_i^{(k+1)}} \gets \dfrac{\ket{v_i}}{\norm{v_i}}$
      \EndIf
    \EndFor
    \State $\Lambda_{k+1}^2 \gets \big|\braket{\phi_1^{(k+1)}\!\otimes\!\cdots\!\otimes\!\phi_N^{(k+1)}|\psi}\big|^2$
    \State $k \gets k+1$
  \Until{$\big|\Lambda_{k}^2-\Lambda_{k-1}^2\big|\le \varepsilon$ \textbf{ or } $k\ge \texttt{max\_iter}$}
  \State \textbf{Output:} $\Lambda_{\max}^2 \approx \Lambda_{k}^2$, closest product state $\bigotimes_i \ket{\phi_i^{(k)}}$
\end{algorithmic}
\end{algorithm}

For $\mathrm{GHZ}_3$  (Fig.\ref{fig:ghz_convergence}) the method converges in $5$ iterations to $\Lambda_{\max}^2=\tfrac{1}{2}$, with local factors collapsing to computational-basis product states (symmetry-related).
For $W_3$, it converges in $16$--$18$ iterations to $\Lambda_{\max}^2=\tfrac{4}{9}$, with each single-qubit factor $(\sqrt{2/3},\,\sqrt{1/3})$ up to a phase ( Fig.\ref{fig:wconvergence}).
In all runs the objective $|\langle\Phi^{(k)}|\psi\rangle|^2$ is monotonically non-decreasing, in agreement with the monotone block-ascent lemma.

\begin{figure}[ht]
  \centering
  \includegraphics[width=0.6\textwidth]{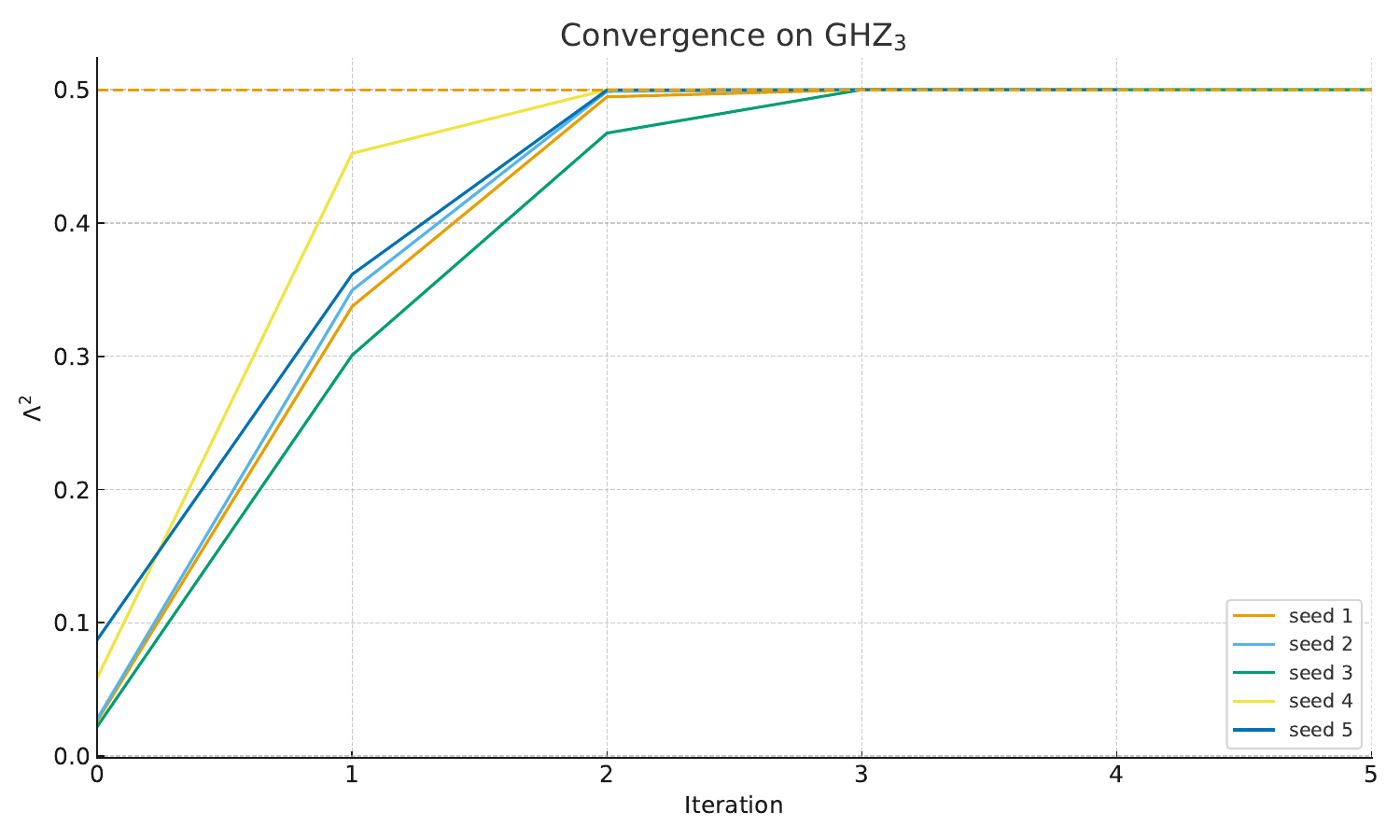}
  \caption{Convergence of $\Lambda^2$ for $\mathrm{GHZ}_3$ across five random initializations (iteration index starts at 0).
  Curves rise monotonically and saturate at the theoretical value $1/2$ (dashed), typically within 2 to 3 iterations; the subsequent plateau reflects numerical tolerance.}
  \label{fig:ghz_convergence}
\end{figure}

\begin{figure}[ht]
  \centering
  \includegraphics[width=0.6\textwidth]{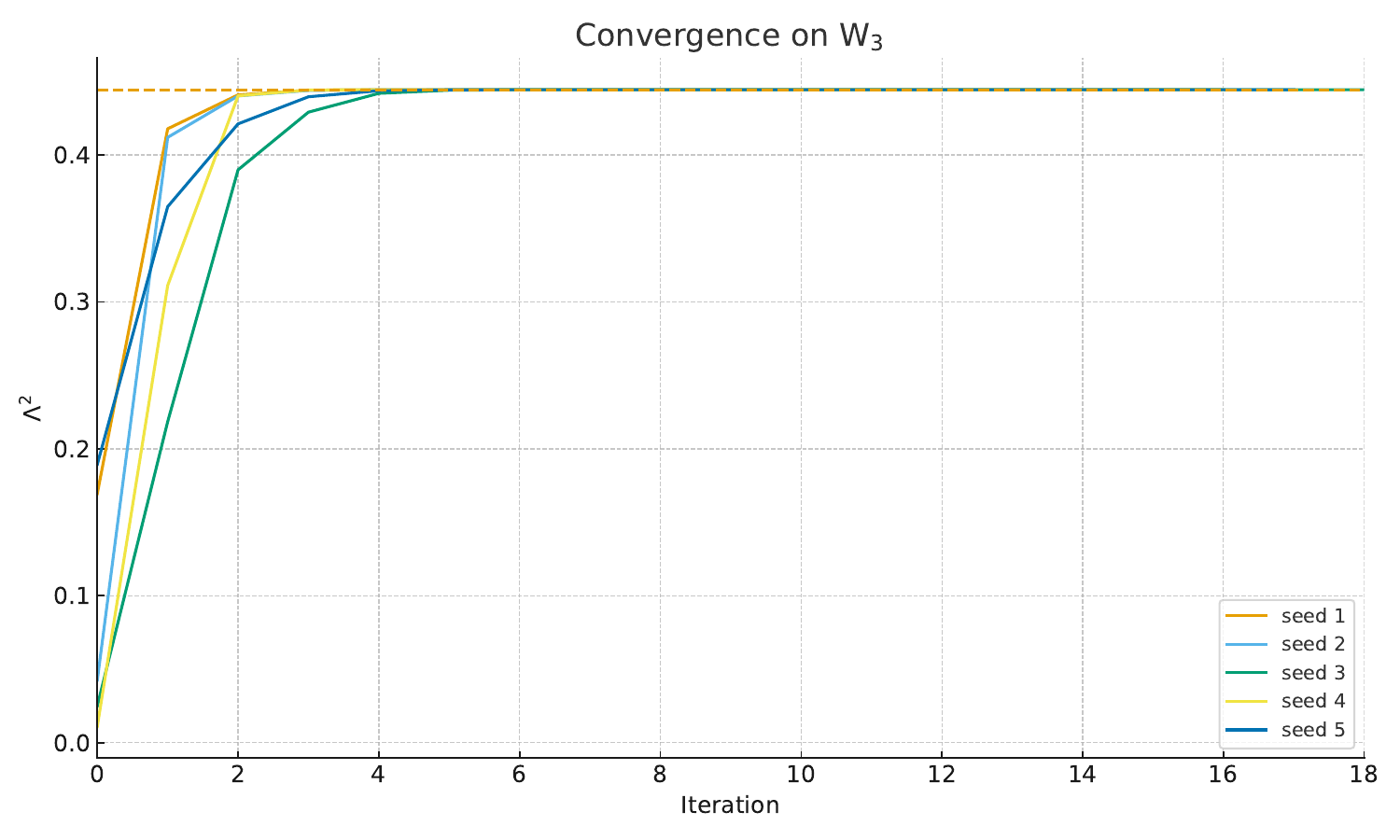}
  \caption{Convergence of $\Lambda^2$ for $W_3$ across five random initializations (iteration index starts at 0).
  Curves rise monotonically and saturate at the theoretical value $4/9$ (dashed) within a few iterations; the long flat tail indicates convergence to tolerance.}
\label{fig:wconvergence}
\end{figure}

\begin{table}[ht]
\centering
\begin{tabular}{lccc}
\hline
State & Seed & Iterations & Final $\Lambda^2$ \\
\hline
GHZ$_3$ & 1--5 & 5 & 0.5 \\
W$_3$   & 1--5 & 16--18 & 0.4444444444 \\
\hline
\end{tabular}
\caption{Summary over five random initializations per state (tolerance $10^{-14}$). Exact values agree with \cite{wei2003geometric}.}
\label{tab:validation_summary}
\end{table}

\section{Results And Discussion} \label{sec:results}
To illustrate the effectiveness of the hybrid perturbative–iterative method, we present results for three representative classes of states: bipartite Schmidt states, the three-qubit GHZ state, and the three-qubit W state. Each case highlights different features of the nonlinear eigenvalue equation and the performance of our solution scheme.
The results presented above underscore the interplay between nonlinear mathematics and quantum information theory. While linear algebra provides the foundation of standard quantum mechanics, nonlinear eigenvalue equations arise naturally in optimization-based definitions of quantum resources, most notably in the geometric measure of entanglement. The hybrid perturbative–iterative method developed here illustrates that such nonlinear problems can be tackled with a balance of analytical approximations and numerical self-consistency schemes. The examples of Schmidt states, GHZ, and W states serve both as benchmarks and as archetypes of the diversity of entanglement structures.

A central conceptual message of our work is that entanglement quantification is not merely a matter of diagonalizing reduced density operators, but instead requires solving fixed-point equations whose solutions may be highly nontrivial. This stands in contrast with bipartite pure states, where the Schmidt decomposition reduces the problem to a simple linear one. Multipartite systems, however, do not admit such simplifications, and the self-consistency between subsystems introduces genuine nonlinearity. By framing the search for closest separable states as a nonlinear eigenvalue problem, we provide a unified mathematical perspective that can be leveraged across different contexts.

Beyond entanglement quantification, the methodology has broader implications. Similar nonlinear optimization problems appear in the determination of quantum channel capacities, in fidelity-based measures of coherence, and in variational approaches to many-body quantum states, such as matrix product states and projected entangled pair states \cite{orus2014practical}. In all of these cases, the iterative fixed-point structure of the equations bears close resemblance to the self-consistent field methods widely used in computational physics and chemistry. Thus, insights from decades of research in numerical analysis and nonlinear optimization can be fruitfully imported into quantum information theory.

From a practical standpoint, the hybrid method offers a trade-off between interpretability and computational power. Perturbative expansions clarify which product states are stable candidates for maximizing overlap, while iterative refinement ensures that the numerical results are accurate and robust. This combination is particularly valuable for medium-sized multipartite systems, where brute-force optimization over product states is infeasible, yet analytical solutions are out of reach. The rapid convergence observed in our simulations suggests that the method may scale more favorably than general-purpose optimization routines, though a detailed complexity analysis remains an open question.

The discussion also highlights limitations and open challenges. Convergence to the global maximum is not guaranteed in the presence of multiple local fixed points, and the quality of the perturbative initialization strongly influences the success of the iteration. For highly entangled states with no clear separable reference, more sophisticated initialization strategies may be required. One promising direction is to integrate machine learning methods to predict good initial product states, effectively combining data-driven heuristics with mathematically rigorous iterative schemes. Another avenue is to adapt tools from convex optimization and semidefinite programming to provide certificates of optimality, complementing the heuristic but efficient iterative refinement.\\
On the conceptual side, understanding the geometry of the solution landscape remains an intriguing challenge. The fixed-point nonlinear map $\mathcal{F}$ defines a dynamical system in the manifold of product states, with attractors corresponding to locally optimal separable approximations. Characterizing the basins of attraction and their relation to entanglement classes could offer new geometric insights into the structure of multipartite quantum states. This approach resonates with ongoing efforts to connect the theory of entanglement with differential geometry, algebraic geometry, and information geometry.\\
In summary, the nonlinear eigenvalue perspective provides both a unifying framework and a practical algorithmic tool for quantum information theory. Bridge conceptual questions about the nature of entanglement with computational strategies capable of addressing realistic systems. We anticipate that further development of this approach possibly integrating perturbative analysis, fixed-point iteration, and modern machine learning will not only deepen our theoretical understanding of entanglement but also accelerate its application in emerging quantum technologies.
\section{Conclusion}

We have presented a systematic framework for solving the nonlinear eigenvalue equations that arise in entanglement quantification through the geometric measure of entanglement. By combining perturbative expansions with iterative refinement, our approach achieves both analytical transparency and numerical reliability. The method reproduces known results for bipartite and canonical multipartite states while also providing a practical pathway for tackling more complex scenarios where closed-form solutions are unavailable.\\
Conceptually, our work highlights that entanglement quantification is intrinsically nonlinear, requiring self-consistent optimization rather than mere spectral analysis. Algorithmically, it shows that hybrid strategies, based on perturbation theory, fixed-point analysis, and numerical iteration, can tame this nonlinearity in a scalable manner. Beyond the immediate application to entanglement, the framework is general and adaptable to other nonlinear optimization problems in quantum information science, including coherence measures, channel capacities, and tensor-network variational states.\\
In the broader perspective, nonlinear analysis may prove to be as fundamental to the study of quantum resources as linear algebra has been to the foundations of quantum mechanics. The approach developed here thus marks a step toward a more comprehensive mathematical toolkit for quantum technologies, where rigorous theory and practical algorithms must advance hand in hand.
\printbibliography
\end{document}